\newcommand{\ie}{\textit{i.e.}, }
\newcommand*{\wrt}{\textit{w.r.t.}}
\newcommand{\ra}{\mathop{\rightarrow}\limits}
\newcommand{\la}{\mathop{\leftarrow}\limits}
\newcommand{\cons}{\mathop{::}}
\newcommand{\nat}{\mathbb{N}}
\newcommand{\cC}{\mathcal{C}}
\newcommand{\cD}{\mathcal{D}}
\newcommand{\cF}{\mathcal{F}}
\newcommand{\cL}{\mathcal{L}}
\newcommand{\cR}{\mathcal{R}}
\newcommand{\cS}{\mathcal{S}}
\newcommand{\cU}{\mathcal{U}}
\newcommand{\cV}{\mathcal{V}}
\newcommand{\capf}{\textsc{cap}}
\newcommand{\renf}{\textsc{ren}}
\newcommand{\id}{\mathit{id}}
\newcommand{\dg}{\operatorname{\mathit{DG}}}
\newcommand{\scycles}{\operatorname{\mathit{s-cycles}}}
\newcommand{\mgu}{\operatorname{\mathit{mgu}}}
\newcommand{\dpos}{\operatorname{\mathit{DPos}}}
\newcommand{\pos}{\operatorname{\mathit{Pos}}}
\newcommand{\npos}{\operatorname{\mathit{NPos}}}
\newcommand{\mroot}{\operatorname{\mathit{root}}}
\newcommand{\scc}{\operatorname{\mathit{SCC}}}
\newcommand{\gunf}{\operatorname{\mathit{gunf}}}
\newcommand{\gunfop}{\operatorname{\mathit{GU}}}
\newcommand{\select}{\operatorname{\mathit{select}}}
\newcommand{\selecta}{\operatorname{\mathit{select\_all}}}
\newcommand{\selectlm}{\operatorname{\mathit{select\_lm}}}
\newcommand{\selectlmne}{\operatorname{\mathit{select\_lmne}}}
\newcommand{\Var}{\operatorname{\mathit{Var}}}
\newcommand{\termset}{\mathcal{T}(\cF,\cV)}
\newcommand{\termsett}{\mathcal{T}(\cF\cup\cF^{\#},\cV)}
\newcommand{\f}{\mathsf{f}}
\newcommand{\g}{\mathsf{g}}
\newcommand{\h}{\mathsf{h}}
\newcommand{\s}{\mathsf{s}}
\newcommand{\zero}{\mathsf{0}}
\newcommand{\one}{\mathsf{1}}
\newcommand{\two}{\mathsf{2}}
\newcommand{\aprove}{\textsf{AProVE}}
\newcommand{\muterm}{\textsf{MU{-}TERM}}
\newcommand{\natt}{\textsf{NaTT}}
\newcommand{\nti}{\textsf{NTI}}
\newcommand{\wanda}{\textsf{WANDA}}
\title{Guided Unfoldings for Finding
Loops in Standard Term Rewriting}
\author{\'Etienne Payet}
\institute{LIM, Universit\'e de La R\'eunion, France\\
\email{etienne.payet@univ-reunion.fr}}
\begin{document}

\maketitle

\begin{abstract}
  In this paper, we reconsider the unfolding-based technique that we have
  introduced previously for detecting loops in standard term rewriting.
  We modify it by \emph{guiding} the unfolding process, using disagreement
  pairs in rewrite rules. This results in a partial computation of the
  unfoldings, whereas the original technique consists of a thorough
  computation followed by a mechanism for eliminating some rules.
  We have implemented this new approach in our tool \nti{} and conducted
  successful experiments on a large set of term rewrite systems.
\end{abstract}

\keywords{term rewrite systems, dependency pairs, non-termination,
loop, unfolding}

%%%%%%%%%%%%%%%%%%%%%%%
\section{Introduction}
%%%%%%%%%%%%%%%%%%%%%%%
In~\cite{payet08}, we have introduced a technique for finding
\emph{loops} (a periodic, special form of non-termination)
in standard term rewriting. It consists of unfolding the
term rewrite system (TRS) $\cR$ under analysis and of
performing a semi-unification~\cite{lankford78} test on
the unfolded rules for detecting loops. The unfolding
operator $U_{\cR}$ which is applied processes both forwards
and backwards and considers \emph{every} subterm of the
rules to unfold, including variable subterms.
\begin{example}\label{ex:zero}
  Let $\cR$ be the TRS consisting of the
  following rules ($x$ is a variable):
  \begin{linenomath*}
    \[R_1 = \underbrace{\f(\s(\zero),\s(\one),x)}_{l}
    \ra \underbrace{\f(x,x,x)}_{r} \qquad
    R_2 = \h \ra \zero \qquad
    R_3 = \h \ra \one\;.\]
  \end{linenomath*}
  Note that $\cR$ is a variation of a well-known example by
  Toyama~\cite{toyama87}.
  Unfolding the subterm $\zero$ of $l$ backwards with the rule
  $R_2$, we get the unfolded rule $U_1=\f(\s(\h),\s(\one),x)\ra \f(x,x,x)$.
  Unfolding the subterm $x$ (a variable) of $l$ backwards with $R_2$,
  we get $U_2=\f(\s(\zero),\s(\one),\h)\ra \f(\zero,\zero,\zero)$.
  Unfolding the first (from the left) occurrence of $x$ in $r$ forwards
  with $R_2$, we get $U_3=\f(\s(\zero),\s(\one),\h)\ra \f(\zero,\h,\h)$.
  We have $\{U_1,U_2,U_3\}\subseteq U_{\cR}(\cR)$.
  Now, if we unfold the subterm $\one$ of $U_1$ backwards with $R_3$,
  we get $\f(\s(\h),\s(\h),x)\ra \f(x,x,x)$, which is an element of
  $U_{\cR}(U_{\cR}(\cR))$.
  The left-hand side $l_1$ of this rule semi-unifies with its right-hand side
  $r_1$ \ie $l_1\theta_1\theta_2=r_1\theta_1$ for the substitutions
  $\theta_1=\{x/\s(\h)\}$ and $\theta_2=\{\}$. Therefore,
  $l\theta_1=\f(\s(\h),\s(\h),\s(\h))$ loops with respect to $\cR$
  because it can be rewritten to itself using the rules of $\cR$
  (the redex is underlined at each step):
  \begin{linenomath*}
    \[\f(\s(\underline{\h}),\s(\h),\s(\h))\ra_{R_2}
    \f(\s(\zero),\s(\underline{\h}),\s(\h))\ra_{R_3}
    \underline{\f(\s(\zero),\s(\one),\s(\h))}\ra_{R_1}
    \f(\s(\h),\s(\h),\s(\h))\;.\]
  \end{linenomath*}
  %\qed
\end{example}

Iterative applications of the operator $U_{\cR}$ result in a
combinatorial explosion which significantly limits the
approach. In order to reduce it, a mechanism is introduced
in~\cite{payet08} for eliminating unfolded rules which are
estimated as \emph{useless} for detecting loops. Moreover,
in practice, three analyses are run in parallel (in different
threads): one with forward unfoldings only, one with backward
unfoldings only and one with forward and backward unfoldings
together.

So, the technique of~\cite{payet08} roughly consists in computing
\emph{all} the rules of $U_{\cR}(\cR)$, $U_{\cR}(U_{\cR}(\cR))$,
\dots{} and removing some useless ones, until the semi-unification
test succeeds on an unfolded rule or a time limit is reached.
Therefore, this approach corresponds to a breadth-first
search for a loop, as the successive iterations of $U_{\cR}$
are computed thoroughly, one after the other.
However, it is not always necessary to compute all the elements
of each iteration of $U_{\cR}$. For instance, in Ex.~\ref{ex:zero}
above, $U_2$ and $U_3$ do not lead to an unfolded rule satisfying
the semi-unification criterion. This is detected by the eliminating
mechanism of~\cite{payet08}, but only \emph{after} these two rules
are generated. In order to \emph{avoid} the generation of these
useless rules, one can notice that $l$ and $r$ differ at the first
argument of $\f$: in $l$, the first argument is $\s(\zero)$ while in
$r$ it is $x$. We say that $\langle \s(\zero),x \rangle$ is a
\emph{disagreement pair} of $l$ and $r$.
Hence, one can first concentrate on resolving this disagreement,
unfolding this pair only, and then, once this is resolved, apply
the same process to another disagreement pair.
\begin{example}[Ex.~\ref{ex:zero} continued]
  \label{ex:zero-continued}
  There are two ways to resolve the disagreement pair
  $\langle \s(\zero),x \rangle$ of $l$ and $r$ (\ie make it
  disappear).

  The first way consists in unifying
  $\s(\zero)$ and $x$, \ie in computing $R_1\theta$
  where $\theta$ is the substitution $\{x/\s(\zero)\}$,
  which gives
  $V_0=\f(\s(\zero),\s(\one),\s(\zero))\ra
  \f(\s(\zero),\s(\zero),\s(\zero))$.
  The left-hand side of $V_0$ does not semi-unify with its right-hand
  side.

  The other way is to unfold $\s(\zero)$ or $x$.
  We decide not to unfold variable subterms, hence
  we select $\s(\zero)$. As it occurs in
  the left-hand side of $R_1$, we unfold it backwards.
  The only possibility is to use $R_2$, which results in
  \begin{linenomath*}
    \[V_1=\f(\s(\h),\s(\one),x)\ra \f(x,x,x)\;.\]
  \end{linenomath*}
  Note that this approach only generates two rules
  ($V_0$ and $V_1$) at the first iteration of the unfolding
  operator. In comparison, the approach of~\cite{payet08}
  produces 14~rules (before elimination), as all the subterms
  of $R_1$ are considered for unfolding.

  Hence, the disagreement pair $\langle \s(\zero),x \rangle$
  has been replaced with the disagreement pair
  $\langle \s(\h),x \rangle$ in $V_1$. Unifying $\s(\h)$ and $x$
  \ie computing $V_1\theta'$ where $\theta'$ is the substitution
  $\{x/\s(\h)\}$, we get
  $V'_1=\f(\s(\h),\s(\one),\s(\h))\ra \f(\s(\h),\s(\h),\s(\h))$.
  So, the disagreement $\langle \s(\zero),x \rangle$ is solved:
  it has been replaced with $\langle \s(\h),\s(\h) \rangle$.
  Now, $\langle \one,\h \rangle$ is a disagreement pair in $V'_1$
  (here we mean the second occurrence of $\h$ in the right-hand side
  of $V'_1$). Unfolding $\one$ backwards with $R_3$, we get
  $W=\f(\s(\h),\s(\h),\s(\h))\ra \f(\s(\h),\s(\h),\s(\h))$
  and unfolding $\h$ forwards with $R_3$, we get
  $W'=\f(\s(\h),\s(\one),\s(\h))\ra \f(\s(\h),\s(\one),\s(\h))$.
  The semi-unification test succeeds on both rules: we get the looping
  terms $\f(\s(\h),\s(\h),\s(\h))$ and $\f(\s(\h),\s(\one),\s(\h))$
  from $W$ and $W'$, respectively.
  % \qed
\end{example}

In the approach sketched in Ex.~\ref{ex:zero-continued}, the iterations of
$U_{\cR}$ are not thoroughly computed because only some selected disagreement
pairs are considered for unfolding, unlike in our previous
approach~\cite{payet08} which tries to unfold all the subterms in rules.
Hence, now the unfoldings are \emph{guided} by disagreement pairs.
In this paper, we formally describe the intuitions
presented above (Sect.~\ref{sect:gunf}--\ref{sect:comparison08}).
We also report experiments on a large set of rewrite systems
from the TPBD~\cite{TPDB} (Sect~\ref{sect:experiments}).
The results we get in practice with the new approach are better
than those obtained with the approach of~\cite{payet08} and
we do not need to perform several analyses in parallel, nor to
unfold variable subterms, unlike~\cite{payet08}.

%%%%%%%%%%%%%%%%%%%%%%%
\section{Preliminaries}
%%%%%%%%%%%%%%%%%%%%%%%
%%% Y_{\R}^n
If $Y$ is an operator from a set $E$
to itself, then for any $e\in E$ we let
\[(Y \uparrow 0)(e) =e\quad\text{and}\quad
\forall n\in\nat: (Y \uparrow n+1)(e) =Y\big((Y\uparrow n)(e)\big)\;.\]

We refer to~\cite{baaderN98} for the basics of rewriting.
From now on, we fix a finite \emph{signature} $\cF$ together
with an infinite countable set $\cV$ of \emph{variables} with
$\cF\cap\cV=\emptyset$. Elements of $\cF$ (\emph{symbols}) are denoted
by $\f,\g,\h,\zero,\one,\dots$ and elements of $\cV$ by $x,y,z,\dots$
The set of terms over $\cF\cup\cV$ is denoted by $\termset$.
For any $t\in\termset$, we let $\mroot(t)$ denote the
root symbol of $t$: $\mroot(t)=\f$ if $t=\f(t_1,\dots,t_m)$
and $\mroot(t)=\bot$ if $t\in\cV$, where $\bot$ is a new
symbol not occurring in $\cF$ and $\cV$.
We let $\Var(t)$ denote the set of variables
occurring in $t$ and $\pos(t)$ the set of positions of $t$.
% We let $\varepsilon$ denote the root position of $t$.
For any $p\in\pos(t)$, we write $t|_p$ to denote the subterm of
$t$ at position $p$ and $t[p\la s]$ to denote the term obtained
from $t$ by replacing $t|_p$ with a term $s$. For any $p,q\in\pos(t)$,
we write $p\leq q$ iff $p$ is a prefix of $q$ and we write $p<q$ iff
$p\leq q$ and $p\neq q$. We also define the set of non-variable positions
of $t$ which either are a prefix of $p$ or include $p$ as a prefix:
\begin{linenomath*}
  \[\npos(t,p)=\{q\in\pos(t) \mid
  q\leq p \lor p\leq q,\ t|_q\not\in\cV\}\;.\]
\end{linenomath*}
% For any non-empty set of positions $S$, we let $\min S$
% denote the position in $S$ which is leftmost and downmost
% (for instance, $\min\{1,2,1.2,1.3,2.1\} = 1.2$).
% We let $\min \emptyset$ be undefined.

A \emph{disagreement position} of terms $s$ and $t$ is a position
$p\in\pos(s)\cap\pos(t)$ such that $\mroot(s|_p) \neq \mroot(t|_p)$
and, for every $q<p$, $\mroot(s|_q) = \mroot(t|_q)$. The set of disagreement
positions of $s$ and $t$ is denoted as $\dpos(s,t)$.
A \emph{disagreement pair} of $s$ and $t$ is an ordered pair
$\langle s|_p,t|_p\rangle$ where $p\in\dpos(s,t)$.
\begin{example}
  Let $s=\f(\s(\zero),\s(\one),y)$, $t=\f(x,x,x)$,
  $p_1=1$, $p_2=2$ and $p_3=3$. Then, $\{p_1,p_2\}\subseteq\dpos(s,t)$
  and $\langle s|_{p_1},t|_{p_1}\rangle=\langle \s(\zero),x\rangle$
  and $\langle s|_{p_2},t|_{p_2}\rangle=\langle \s(\one),x\rangle$
  are disagreement pairs of $s$ and $t$. However, $p_3\not\in\dpos(s,t)$
  because $\langle s|_{p_3},t|_{p_3}\rangle=\langle y,x\rangle$
  and $\mroot(y)=\mroot(x)=\bot$.
  % \qed
\end{example}

We write substitutions as sets of the form
$\{x_1/t_1,\dots,x_n/t_n\}$ denoting that for each
$1\leq i \leq n$, variable $x_i$ is mapped to term
$t_i$ (note that $x_i$ may occur in $t_i$). The empty
substitution (identity) is denoted by $\id$. The
application of a substitution $\theta$ to a syntactic
object $o$ is denoted by $o\theta$.
We let $\mgu(s,t)$ denote the (up to variable renaming)
most general unifier of terms $s$ and $t$.
We say that $s$ \emph{semi-unifies} with $t$ when
$s\theta_1\theta_2=t\theta_1$ for some substitutions
$\theta_1$ and $\theta_2$.

A \emph{rewrite rule} (or \emph{rule}) over $\cF\cup\cV$
has the form $l \ra r$ with $l,r\in\termset$,
$l\not\in\cV$ and $\Var(r)\subseteq\Var(l)$.
A \emph{term rewriting system} (TRS) over $\cF\cup\cV$ is a
finite set of rewrite rules over $\cF\cup\cV$.
% We consider rules modulo variable renaming:
% any new occurrence of a rule is always renamed apart
% \ie contains new variables not previously met.
% This is not necessary for the form of rewriting that
% we consider in this paper but for the definition of our
% unfolding operators, which use narrowing (see
% Def.~\ref{def:forward-unf}, Def.~\ref{def:backward-unf} and
% (1) in Def.~\ref{def:gunf-op}).
%
Given a TRS $\cR$ and some terms $s$ and $t$, we write $s\ra_{\cR}t$
if there is a rewrite rule $l\ra r$ in $\cR$, a substitution $\theta$ and
$p \in \pos(s)$ such that $s|_p = l\theta$ and $t=s[p\la r\theta]$.
We let $\ra_{\cR}^+$ (resp. $\ra_{\cR}^*$) denote the transitive (resp.
reflexive and transitive) closure of $\ra_{\cR}$.
We say that a term $t$ is \emph{non-terminating} with respect to (\wrt)
$\cR$ when there exist infinitely many terms $t_1,t_2,\ldots$ such that
$t\ra_{\cR}t_1\ra_{\cR}t_2\ra_{\cR}\cdots$.
We say that $\cR$ is \emph{non-terminating} if there exists
a non-terminating term \wrt{} it. A term $t$ \emph{loops} \wrt{}
$\cR$ when $t\ra_{\cR}^+C[t\theta]$ for some context $C$ and substitution
$\theta$. Then $t\ra_{\cR}^+C[t\theta]$ is called a \emph{loop} for $\cR$.
We say that $\cR$ is \emph{looping} when it admits a loop.
If a term loops \wrt{} $\cR$ then it is non-terminating \wrt{} $\cR$.

The unfolding operators that we define in Sect.~\ref{sect:gunf} of this
paper use \emph{narrowing}. We say that a term $s$ narrows forwards
(resp. backwards) to a term $t$ \wrt{} a TRS $\cR$ when there exists a
non-variable position $p$ of $s$ and a rule $l\ra r$ of $\cR$ renamed
with new variables not previously met such that $t=s[p\leftarrow r]\theta$
(resp. $t=s[p\leftarrow l]\theta$) where $\theta=\mgu(s|_p,l)$
(resp. $\theta=\mgu(s|_p,r)$).

We refer to~\cite{AG00} for details on dependency pairs.
The \emph{defined symbols} of a TRS $\cR$ over $\cF\cup\cV$
are $\cD_{\cR}=\{\mroot(l)\mid l\ra r\in\cR\}$.
For every $\f\in\cF$ we let $\f^{\#}$ be a fresh \emph{tuple symbol}
with the same arity as $\f$. The set of tuple symbols is denoted
as $\cF^{\#}$. The notations and definitions above with terms over
$\cF\cup\cV$ are naturally extended to terms over
$(\cF\cup\cF^{\#})\cup\cV$.
Elements of $\cF\cup\cF^{\#}$ are denoted as $f,g,\dots$
If $t=\f(t_1,\dots,t_m)\in\termset$, we let
$t^{\#}$ denote the term $\f^{\#}(t_1,\dots,t_m)$,
and we call $t^{\#}$ an \emph{$\cF^{\#}$-term}.
An \emph{$\cF^{\#}$-rule} is a rule whose left-hand and
right-hand sides are $\cF^{\#}$-terms.
The set of \emph{dependency pairs} of $\cR$ is
\begin{linenomath*}
  \[\{l^{\#}\ra t^{\#} \mid
  l\ra r\in\cR,\ t\text{ is a subterm of } r,\
  \mroot(t)\in\cD_{\cR}\}\;.\]
\end{linenomath*}
A sequence $s_1\ra t_1,\dots,s_n\ra t_n$
of dependency pairs of $\cR$ is an \emph{$\cR$-chain}
if there exists a substitution $\sigma$ such that
$t_i\sigma\ra^*_{\cR} s_{i+1}\sigma$ holds for every two consecutive
pairs $s_i\ra t_i$ and $s_{i+1}\ra t_{i+1}$ in the sequence.
\begin{theorem}[\cite{AG00}]\label{thm:infinite-chain}
  $\cR$ is non-terminating iff there exists an infinite
  $\cR$-chain.
\end{theorem}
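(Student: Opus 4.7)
The plan is to prove each direction separately by standard reasoning from dependency-pair theory.

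For the easy direction ($\Leftarrow$), I would unfold an infinite chain directly into an infinite rewrite sequence. Given an infinite $\cR$-chain with pairs $l_i^{\#}\ra u_i^{\#}$ ($i\geq 1$) and witnessing substitution $\sigma$, each pair comes from a rule $l_i\ra r_i\in\cR$ and a context $C_i$ with $r_i=C_i[u_i]$. At the non-tuple level this yields a root rewrite step $l_i\sigma\ra_{\cR}r_i\sigma=C_i[u_i\sigma]$, while the chain condition gives $u_i\sigma\ra^*_{\cR}l_{i+1}\sigma$. Concatenating these for all $i$ produces an infinite $\ra_{\cR}$-reduction starting from $l_1\sigma$, showing that $\cR$ is non-terminating.

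For the harder direction ($\Rightarrow$), I would rely on the classical notion of a \emph{minimal non-terminating term}, that is, a non-terminating term all of whose proper subterms are terminating. Since $\cR$ is non-terminating, such a term exists by taking a non-terminating subterm of minimal size. Any minimal non-terminating term $t$ must have the form $\f(t_1,\dots,t_m)$ with $\f\in\cD_{\cR}$: it must be reducible whereas its arguments are terminating (reducts of terminating terms are terminating, so any infinite reduction issuing from $t$ must eventually perform a root step). Hence one finds a rule $l\ra r\in\cR$, renamed with fresh variables, and a substitution $\theta$ such that $t\ra^*_{\cR}l\theta\ra_{\cR}r\theta$ with $r\theta$ still non-terminating. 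The non-termination of $r\theta$ cannot arise inside an instance $x\theta$ of a variable $x$ of $r$, since every such $x\theta$ is a subterm of the terminating term $l\theta$; it must therefore come from a non-variable subterm $u$ of $r$ with $\mroot(u)\in\cD_{\cR}$ such that $u\theta$ is non-terminating. Choosing $u\theta$ minimally non-terminating yields both a dependency pair $l^{\#}\ra u^{\#}$ and a new minimal non-terminating term to which the same argument applies.

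Iterating produces an infinite sequence of dependency pairs $l_i^{\#}\ra u_i^{\#}$ together with local substitutions $\theta_i$ and reductions $u_i\theta_i\ra^*_{\cR}l_{i+1}\theta_{i+1}$. The main technical hurdle is assembling these local data into a single substitution $\sigma$ witnessing the chain condition uniformly. I would handle it by renaming the variables of the $i$-th rule with pairwise disjoint fresh sets $V_i$, so that the $\theta_i$ have pairwise disjoint domains, and then defining $\sigma$ from the $\theta_i$ (taking care that every variable of every $l_i$ and $u_i$ receives its intended value, which may involve variables introduced at earlier stages). With this $\sigma$, the relation $u_i^{\#}\sigma\ra^*_{\cR}l_{i+1}^{\#}\sigma$ holds for all $i$ by construction, so the constructed sequence is an infinite $\cR$-chain as required.
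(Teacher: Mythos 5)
The paper gives no proof of this theorem---it is imported from \cite{AG00}---and your argument is essentially the classical proof from that reference: the $\Leftarrow$ direction by unfolding the chain into an infinite reduction $l_1\sigma\ra_{\cR}C_1\sigma[u_1\sigma]\ra^*_{\cR}C_1\sigma[l_2\sigma]\ra_{\cR}\cdots$, and the $\Rightarrow$ direction via minimal non-terminating terms, using $\Var(r)\subseteq\Var(l)$ to rule out the non-termination hiding inside a variable instantiation, and pairwise-disjoint renaming to merge the local matchers $\theta_i$ into one chain substitution $\sigma$. The reasoning is correct as a proof sketch; the only point worth stating more explicitly is that the reduction $u_i^{\#}\sigma\ra^*_{\cR}l_{i+1}^{\#}\sigma$ transfers to and from the non-tuple level because no rule of $\cR$ has a tuple symbol at its root, so all these steps occur strictly below the root.
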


The \emph{dependency graph} of $\cR$ is the graph whose
nodes are the dependency pairs of $\cR$ and there is an arc
from $s\ra t$ to $u\ra v$ iff $s\ra t, u\ra v$
is an $\cR$-chain. This graph is not computable in general
since it is undecidable whether two dependency pairs of $\cR$
form an $\cR$-chain. Hence, for automation, one constructs an
estimated graph containing all the arcs of the real graph.
This is done by computing \emph{connectable terms}, which
form a superset of those terms $s,t$ where
$s\sigma\ra^*_{\cR} t\sigma$ holds for some substitution $\sigma$.
The approximation uses the transformations $\capf$ and $\renf$
where, for any $t\in\termsett$, $\capf(t)$ (resp. $\renf(t)$) results
from replacing all subterms of $t$ with defined root symbol
(resp. all variable occurrences in $t$) by different new variables
not previously met. More formally:
\begin{linenomath*}
\begin{align*}
  \capf(x) &= x \text{ if } x\in\cV\\
  \capf(f(t_1,\dots,t_m)) &= \left\{
  \begin{array}{ll}
    \text{a new variable not previously met} & \text{if } f\in\cD_{\cR}\\
    f(\capf(t_1),\dots,\capf(t_m)) & \text{if } f\not\in\cD_{\cR}
  \end{array}\right.\\
  \renf(x) &= \text{a new variable not previously met}\\
  & \text{\phantom{= } if $x$ is an occurrence of a variable}\\
  \renf(f(t_1,\dots,t_m)) &= f(\renf(t_1),\dots,\renf(t_m))
\end{align*}
\end{linenomath*}
A term $s$ is \emph{connectable} to a term $t$ if
$\renf(\capf(s))$ unifies with $t$. An $\cF^{\#}$-rule
$l\ra r$ is connectable to an $\cF^{\#}$-rule
$s\ra t$ if $r$ is connectable to $s$.
The \emph{estimated dependency graph} of $\cR$ is denoted as
$\dg(\cR)$. Its nodes are the dependency pairs of $\cR$ and
there is an arc from $N$ to $N'$ iff $N$ is connectable to $N'$.
% SCC
We let $\scc(\cR)$ denote the set of strongly
connected components of $\dg(\cR)$ that contain at least one
arc. Hence, a strongly connected component consisting of a
unique node is in $\scc(\cR)$ only if there is an arc from
the node to itself.

\begin{example}\label{ex:two}
  Let $\cR$ be the TRS of Ex.~\ref{ex:zero}.
  We have $\scc(\cR)=\{\cC\}$ where $\cC$ consists of the node
  $N=\f^{\#}(\s(\zero),\s(\one),x)\ra\f^{\#}(x,x,x)$ and of
  the arc $(N,N)$.
  % \qed
\end{example}

\begin{example}\label{ex:one}
  Let $\cR' = \left\{\f(\zero) \ra \f(\one),\
  \f(\two) \ra \f(\zero),\ \one\ra\zero\right\}$.
  We have $\scc(\cR')=\{\cC'\}$ where $\cC'$ consists of the
  nodes $N_1=\f^{\#}(\zero)\ra\f^{\#}(\one)$ and
  $N_2=\f^{\#}(\two)\ra\f^{\#}(\zero)$ and of the
  arcs $\{N_1,N_2\}\times\{N_1,N_2\}\setminus\{(N_2,N_2)\}$.
  The strongly connected component of $\dg(\cR')$ which consists
  of the unique node $\f^{\#}(\zero)\ra\one^{\#}$
  does not belong to $\scc(\cR')$ because it has no arc.
  % \qed
\end{example}

%% FINITE SEQUENCES
Finite sequences are written as $[e_1,\dots,e_n]$. We let $\cons$
denote the concatenation operator over finite sequences.
A \emph{path} in $\dg(\cR)$ is a finite sequence
$[N_1,N_2,\dots,N_n]$ of nodes where, for each $1 \leq i < n$,
there is an arc from $N_i$ to $N_{i+1}$. When there is also
an arc from $N_n$ to $N_1$, the path is called a \emph{cycle}.
It is called a \emph{simple cycle} if, moreover, there is no
repetition of nodes (modulo variable renaming).

%%%%%%%%%%%%%%%%%%%%%%%%%%%%%%%%
\section{Guided unfoldings}\label{sect:gunf}
%%%%%%%%%%%%%%%%%%%%%%%%%%%%%%%%
%
In the sequel of this paper, we let $\cR$ denote a TRS over
$\cF\cup\cV$.

While the method sketched in Ex.~\ref{ex:zero-continued}
can be applied directly to the TRS $\cR$ under analysis,
we use a refinement based on the estimated dependency graph
of $\cR$. The cycles in $\dg(\cR)$ are over-approximations of
the infinite $\cR$-chains \ie any infinite $\cR$-chain
corresponds to a cycle in the graph but some cycles in the
graph may not correspond to any $\cR$-chain. Moreover,
by Theorem~\ref{thm:infinite-chain}, if we find an infinite
$\cR$-chain then we have proved that $\cR$ is non-terminating.
Hence, we concentrate on the cycles in $\dg(\cR)$.
We try to \emph{solve} them, \ie to find out if they correspond
to any infinite $\cR$-chain. This is done by iteratively
unfolding the $\cF^{\#}$-rules of the cycles. If the
semi-unification test succeeds on one of the generated unfolded
rules, then we have found a loop.
\begin{definition}[Syntactic loop]
  \label{def:syn-loop}
  A \emph{syntactic loop} in $\cR$ is a finite sequence
  $[N_1,\dots,N_n]$ of distinct (modulo variable renaming)
  $\cF^{\#}$-rules where, for each $1\leq i <n$, $N_i$ is
  connectable to $N_{i+1}$ and $N_n$ is connectable to $N_1$.
  We identify syntactic loops consisting
  of the same (modulo variable renaming) elements, not
  necessarily in the same order.
\end{definition}

Note that the simple cycles in $\dg(\cR)$ are syntactic loops.
For any $\cC \in \scc(\cR)$, we let $\scycles(\cC)$ denote
the set of simple cycles in $\cC$. We also let
\begin{linenomath*}
  \[\scycles(\cR)=\cup_{\cC \in \scc(\cR)}\scycles(\cC)\]
\end{linenomath*}
be the set of simple cycles in $\cR$.
The rules of any simple cycle in $\cR$ are assumed to be
pairwise variable disjoint.

\begin{example}[Ex.~\ref{ex:two} and~\ref{ex:one} continued]
  \label{ex:cycles}
  We have
  \begin{linenomath*}
    \[\scycles(\cR) = \{[N]\} \quad\text{and}\quad
    \scycles(\cR') =\{[N_1],[N_1, N_2]\}\]
  \end{linenomath*}
  with, in $\scycles(\cR')$, $[N_1, N_2] = [N_2, N_1]$.
  % \qed
\end{example}

The operators we use for unfolding an $\cF^{\#}$-rule $R$ at a
disagreement position $p$ are defined as follows.
They are guided by a given term $u$ and they only work on the
non-variable subterms of $R$. They unify a subterm of $R$ with a subterm
of $u$, see~(1) in Def.~\ref{def:forward-unf}-\ref{def:backward-unf}.
This corresponds to what we did in Ex.~\ref{ex:zero-continued} for
generating $V'_1$ from $V_1$, but in the definitions below we do not
only consider $p$, we consider all its prefixes. The operators
also unfold $R$ using narrowing, see (2) in
Def.~\ref{def:forward-unf}-\ref{def:backward-unf}: there,
$l' \ra r' \ll \cR$ means that $l' \ra r'$ is a new occurrence of
a rule of $\cR$ that contains new variables
not previously met. This corresponds to what we did in
Ex.~\ref{ex:zero-continued} for generating $V_1$ from $R_1$.
In contrast to~(1), the positions that are greater than $p$ are also
considered in~(2); for instance in Ex.~\ref{ex:zero-continued}, we
unfolded the inner subterm $\zero$ of the disagreement pair component
$\s(\zero)$.
\begin{definition}[Forward guided unfoldings]
  \label{def:forward-unf}
  Let $l\ra r$ be an $\cF^{\#}$-rule, $s$ be an $\cF^{\#}$-term and
  $p\in\dpos(r,s)$. The forward unfoldings of $l\ra r$ at position $p$,
  guided by $s$ and \wrt{} $\cR$ are
  \begin{linenomath*}
    \begin{align*}
      F_{\cR}(l\ra r, s, p) = & \left\{U \;\middle|\;
      \begin{array}{l}
        q\in\npos(r,p),\ q\leq p\\
        \theta=\mgu(r|_q,s|_q),\ U=(l \ra r)\theta
      \end{array}\right\}^{(1)}\cup\\
      & \left\{U \;\middle|\;
      \begin{array}{l}
        q\in\npos(r,p),\
        l' \ra r' \ll \cR\\
        \theta=\mgu(r|_q,l'),\ U=(l \ra r[q\la r'])\theta
      \end{array}\right\}^{(2)}.
    \end{align*}
  \end{linenomath*}
\end{definition}

\begin{definition}[Backward guided unfoldings]
  \label{def:backward-unf}
  Let $s\ra t$ be an $\cF^{\#}$-rule, $r$ be an $\cF^{\#}$-term and
  $p\in\dpos(r,s)$. The backward unfoldings of $s\ra t$ at position $p$,
  guided by $r$ and \wrt{} $\cR$ are
  \begin{linenomath*}
    \begin{align*}
      B_{\cR}(s\ra t, r, p) = & \left\{U \;\middle|\;
      \begin{array}{l}
        q\in\npos(s,p),\ q\leq p\\
        \theta=\mgu(r|_q,s|_q),\ U=(s \ra t)\theta
      \end{array}\right\}^{(1)}\cup\\
      & \left\{U \;\middle|\;
      \begin{array}{l}
        q\in\npos(s,p),\ l' \ra r' \ll \cR\\
        \theta=\mgu(s|_q,r'),\ U=(s[q\la l'] \ra t)\theta
      \end{array}\right\}^{(2)}.
    \end{align*}
  \end{linenomath*}
\end{definition}

\begin{example}[Ex.~\ref{ex:two} and \ref{ex:cycles} continued]
  \label{ex:unfold}
  $[N]$ is a simple cycle in $\cR$ with
  \begin{linenomath*}
    \[N=\underbrace{\f^{\#}(\s(\zero),\s(\one),x)}_s \ra
    \underbrace{\f^{\#}(x,x,x)}_t\;.\]
  \end{linenomath*}
  Let $r=t$. Then $p=1\in\dpos(r,s)$. Moreover, $q=1.1\in\npos(s,p)$
  because $p\leq q$ and $s|_q=\zero$ is not a variable.
  Let $l'\ra r'=\h\ra\zero\in\cR$. We have $\id=\mgu(s|_q, r')$. Hence, by~(2)
  in Def.~\ref{def:backward-unf}, we have
  \begin{linenomath*}
    \[U_1=\underbrace{\f^{\#}(\s(\h),\s(\one),x)}_{s_1} \ra
    \underbrace{\f^{\#}(x,x,x)}_{t_1}
    \in B_{\cR}(N,r,p)\;.\]
  \end{linenomath*}
  Let $r_1=t_1$. Then, $p=1\in\dpos(r_1,s_1)$. Moreover, $p\in\npos(s_1,p)$
  with $s_1|_p=\s(\h)$, $p\leq p$ and $r_1|_p=x$.
  As $\{x/\s(\h)\}=\mgu(r_1|_p,s_1|_p)$, by~(1) in Def.~\ref{def:backward-unf}
  we have
  \begin{linenomath*}
    \[U'_1=\underbrace{\f^{\#}(\s(\h),\s(\one),\s(\h))}_{s'_1} \ra
    \underbrace{\f^{\#}(\s(\h),\s(\h),\s(\h))}_{t'_1}
    \in B_{\cR}(U_1,r_1,p)\;.\]
  \end{linenomath*}
  Let $r'_1=t'_1$. Then, $p'=2.1\in\dpos(r'_1,s'_1)$ with
  $p'\in\npos(s'_1,p')$. Let $l{''}\ra r{''}=\h\ra\one\in\cR$.
  We have $\id=\mgu(s'_1|_{p'}, r{''})$. Hence, by~(2)
  in Def.~\ref{def:backward-unf}, we have
  \begin{linenomath*}
    \[U^{''}_1=\f^{\#}(\s(\h),\s(\h),\s(\h)) \ra
    \f^{\#}(\s(\h),\s(\h),\s(\h)) \in B_{\cR}(U'_1,r'_1,p')\;.\]
  \end{linenomath*}
  % \qed
\end{example}

Our approach consists of iteratively unfolding syntactic
loops using the following operator.
\begin{definition}[Guided unfoldings]
  \label{def:gunf-op}
  Let $X$ be a set of syntactic loops of $\cR$.
  The \emph{guided unfoldings} of $X$ \wrt{} $\cR$ are
  defined as
  \begin{linenomath*}
  \begin{align*}
    \gunfop_{\cR}(X) =
    & \left\{L\cons[U]\cons L' \;\middle|\;
    \begin{array}{l}
      L\cons[l\ra r,s\ra t]\cons L' \in X,\
      \theta = \mgu(r,s)\\
      U = (l \ra t)\theta,\
      L\cons[U]\cons L'\text{ is a syntactic loop}
    \end{array}
    \right\}^{(1)} \cup\\
    & \left\{L\cons[U,s\ra t]\cons L' \;\middle|\;
    \begin{array}{l}
      L\cons[l\ra r,s\ra t]\cons L' \in X\\
      p \in \dpos(r,s),\
      U  \in F_{\cR}(l\ra r, s, p)\\{}
      L\cons[U,s\ra t]\cons L' \text{ is a syntactic loop}
    \end{array}
    \right\}^{(2)} \cup\\
    & \left\{L\cons[l\ra r,U]\cons L' \;\middle|\;
    \begin{array}{l}
      L\cons[l\ra r,s\ra t]\cons L' \in X\\
      p \in \dpos(r,s),\
      U \in B_{\cR}(s\ra t, r, p)\\{}
      L\cons[l\ra r,U]\cons L' \text{ is a syntactic loop}
    \end{array}
    \right\}^{(3)} \cup\\
    & \left\{[U] \;\middle|\;
    \begin{array}{l}
      [l\ra r]\in X,\
      p \in \dpos(r,l)\\
      U \in F_{\cR}(l\ra r, l, p) \cup B_{\cR}(l\ra r, r, p)\\{}
      [U] \text{ is a syntactic loop}
    \end{array}
    \right\}^{(4)}.
  \end{align*}
  \end{linenomath*}
\end{definition}
The general idea is to \emph{compress} the syntactic loops into singletons
by iterated applications of this operator.
The semi-unification criterion can then be applied to these singletons,
see Theorem~\ref{theorem:nonterm-criterion} below. Compression takes place in
case~(1) of Def.~\ref{def:gunf-op}: when the right-hand side of a rule unifies
with the left-hand side of its successor, then both rules are merged. When
merging two successive rules is not possible yet, the operators $F_{\cR}$ and
$B_{\cR}$ are applied to try to transform the rules into mergeable ones,
see cases~(2) and~(3). Once a syntactic loop has been compressed to a
singleton, we keep on unfolding (case~(4)) to try reaching a compressed
form satisfying the semi-unification criterion.
Note that after an unfolding step, we might get a sequence which is not a
syntactic loop: the newly generated rule $U$ might be identical to another
rule in the sequence or it might not be connectable to its predecessor or
successor. So, (1)--(4) require that the generated sequence is a syntactic loop.

The guided unfolding semantics is defined as follows, in the
style of~\cite{alpuenteFMV97,payet08}.
\begin{definition}[Guided unfolding semantics]
  \label{def:gunf}
  The \emph{guided unfolding semantics} of $\cR$ is the limit
  of the unfolding process described in Def.~\ref{def:gunf-op},
  starting from the simple cycles in $\cR$:
  $\gunf(\cR) = \cup_{n\in\nat}\gunf(\cR,n)$ where, for all
  $n\in\nat$,
  \begin{linenomath*}
    \[\gunf(\cR,n) = (\gunfop_{\cR}\uparrow n)(\scycles(\cR))\;.\]
  \end{linenomath*}
\end{definition}
This semantics is very similar to the \emph{overlap closure}~\cite{Guttag83}
of $\cR$ (denoted by $OC(\cR)$). A difference is that for computing
$\gunf(\cR)$ one starts from dependency pairs of $\cR$ ($\scycles(\cR)$),
whereas for computing $OC(\cR)$ one starts directly from the rules of $\cR$.
In case~(1) of Def.~\ref{def:gunf-op}, we merge two unfolded rules.
Similarly, for computing $OC(\cR)$ one overlaps closures with closures.
However, in cases~(2)--(4) the operators $F_{\cR}$ and $B_{\cR}$ narrow an
unfolded rule with a rule of $\cR$, not with another unfolded rule, unlike
in the computation of $OC(\cR)$.

\begin{example}\label{ex:gunf1}
  By Ex.~\ref{ex:unfold} and $(4)$ in Def.~\ref{def:gunf-op},
  we have $[U^{''}_1]\in \gunf(\cR,3)$.
  %\qed
\end{example}

\begin{example}\label{ex:gunf2}
  Let $\cR = \{\f(\zero) \ra \g(\one),\ \g(\one) \ra \f(\zero)\}$.
  Then, $\scc(\cR)=\{\cC\}$ where $\cC$ consists of the nodes
  $N_1=\f^{\#}(\zero)\ra\g^{\#}(\one)$ and
  $N_2=\g^{\#}(\one)\ra\f^{\#}(\zero)$ and of the arcs
  $(N_1,N_2)$ and $(N_2,N_1)$. Moreover,
  $\scycles(\cR) =\{[N_1, N_2]\}$.
  As $\id=\mgu(\g^{\#}(\one),\g^{\#}(\one))$ and
  $(\f^{\#}(\zero)\ra\f^{\#}(\zero))\id =
  \f^{\#}(\zero)\ra\f^{\#}(\zero)$,
  by~$(1)$ in Def.~\ref{def:gunf-op} we have
  $[\f^{\#}(\zero)\ra\f^{\#}(\zero)]\in\gunf(\cR,1)$.
  %\qed
\end{example}

\begin{proposition}
  \label{prop:rewritesto}
  For any $n\in\nat$ and $[s^{\#}\ra t^{\#}]\in\gunf(\cR,n)$
  there exists some context $C$ such that $s \ra_{\cR}^+ C[t]$.
\end{proposition}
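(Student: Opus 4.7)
The plan is to use induction on $n$ with a strengthened hypothesis: for every $n\in\nat$ and every $[l_1^{\#}\ra r_1^{\#},\ldots,l_k^{\#}\ra r_k^{\#}]\in\gunf(\cR,n)$ with $l_i,r_i\in\termset$, there exist contexts $C_1,\ldots,C_k$ such that $l_i\ra_{\cR}^+ C_i[r_i]$ for every $1\leq i\leq k$. The strengthening is necessary because cases~(2) and~(3) of Def.~\ref{def:gunf-op} modify one rule inside a longer syntactic loop and leave the others untouched, so a direct induction on singletons alone does not carry through. The proposition is then the special case $k=1$.

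The base case $n=0$ is immediate: an element of $\scycles(\cR)$ is a sequence of dependency pairs, and a dependency pair $l^{\#}\ra t^{\#}$ comes from some $l\ra r\in\cR$ with $t$ occurring at a position $p$ of $r$, so $l\ra_{\cR} r$ and the surrounding term $r[p\la\cdot]$ is the desired context.

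For the inductive step, pick $L\in\gunf(\cR,n+1)$ obtained from $L_0\in\gunf(\cR,n)$ by one of cases~(1)--(4) of Def.~\ref{def:gunf-op}. Rules inherited unchanged from $L_0$ are handled directly by the induction hypothesis, so only the freshly generated rule $U$ needs argument. In case~(1), $U=(l\ra t)\theta$ with $\theta=\mgu(r,s)$: applying the induction hypothesis to the two merged rules yields $l\ra_{\cR}^+ C_1[r]$ and $s\ra_{\cR}^+ C_2[t]$, and after instantiation by $\theta$ the coincidence $r\theta=s\theta$ lets us splice the two sequences into $l\theta\ra_{\cR}^+ (C_1\theta)[(C_2\theta)[t\theta]]$. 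In cases~(2), (3) and~(4), $U$ is produced by $F_{\cR}$ or $B_{\cR}$: in the subcase labelled~(1) of Def.~\ref{def:forward-unf}/\ref{def:backward-unf}, $U$ is merely a $\theta$-instance of the rule being unfolded, so instantiating the induction hypothesis by $\theta$ suffices; in the subcase labelled~(2), one additional $\cR$-step applying the narrowing rule $l'\ra r'$ at position $q$ is appended (when $F_{\cR}$ is used) or prepended (when $B_{\cR}$ is used) to the instantiated sequence from the induction hypothesis in order to reach the new target of $U$.

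The main subtlety I expect is the gluing in case~(1), i.e., verifying that concatenating the two rewrite sequences through the coincidence $r\theta=s\theta$ yields a single valid sequence inside the composite context $(C_1\theta)[(C_2\theta)[\cdot]]$. A secondary bookkeeping point is that, since the narrowing rule $l'\ra r'\in\cR$ has roots in $\cF$, the position $q$ chosen by $F_{\cR}$ or $B_{\cR}$ in its subcase~(2) cannot be the root of the $\cF^{\#}$-term being unfolded; hence the extra rewriting step lies entirely inside an $\cF$-subterm and is a legitimate $\ra_{\cR}$-step.
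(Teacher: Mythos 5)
Your proof is correct, but it takes a genuinely different route from the paper. The paper's own proof is a two-line reduction: it asserts that every rule $s^{\#}\ra t^{\#}$ arising in $\gunf(\cR)$ corresponds to a rule $s\ra C[t]$ in the (unguided) unfolding semantics $\mathit{unf}(\cR)$ of the earlier work~\cite{payet08}, and then invokes Prop.~3.12 of that paper to conclude $s\ra_{\cR}^+C[t]$. You instead give a self-contained induction on $n$, and your strengthening of the hypothesis to all components of a syntactic loop is exactly the right move, since cases~(2) and~(3) of Def.~\ref{def:gunf-op} rewrite one rule in place inside a longer sequence. Your case analysis is sound: the splice in case~(1) works because $\ra_{\cR}$ is stable and closed under contexts, so $l\theta\ra_{\cR}^+C_1\theta[r\theta]=C_1\theta[s\theta]\ra_{\cR}^+C_1\theta[C_2\theta[t\theta]]$; subcase~(1) of $F_{\cR}$/$B_{\cR}$ is just instantiation; and in subcase~(2) your observation that $q$ cannot be the root position (since $l'$ and $r'$ have roots in $\cF$ while the unfolded term has root in $\cF^{\#}$, so no unifier exists at $q=\epsilon$) is precisely what guarantees the extra narrowing step is a legitimate $\ra_{\cR}$-step on the underlying $\cF$-term. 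What each approach buys: the paper's reduction is short and reuses established machinery, but it silently leaves the embedding of the guided unfoldings into $\mathit{unf}(\cR)$ unproved, which is itself an induction of essentially the shape you carried out; your argument makes the paper self-contained at the cost of redoing that induction explicitly.
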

\begin{proof}
  For some context $C$, we have $s \ra C[t]\in\mathit{unf}(\cR)$
  where $\mathit{unf}(\cR)$ is the unfolding semantics defined
  in~\cite{payet08}. So, by Prop.~3.12 of~\cite{payet08},
  $s \ra_{\cR}^+ C[t]$.
  % \qed
\end{proof}

%%%%%%%%%%%%%%%%%%%%%%%%%%%%%%%%%%%
\section{Inferring terms that loop}\label{sect:loops}
%%%%%%%%%%%%%%%%%%%%%%%%%%%%%%%%%%%
As in~\cite{payet08}, we use semi-unification~\cite{lankford78}
for detecting loops. Semi-unification encompasses both matching
and unification, and a polynomial-time algorithm for it can be
found in~\cite{Kapur91}.
\begin{theorem}\label{theorem:nonterm-criterion}
  For any $n\in\nat$, if there exist
  $[s^{\#}\ra t^{\#}] \in \gunf(\cR,n)$
  and some substitutions $\theta_1$ and $\theta_2$
  such that $s\theta_1\theta_2 = t\theta_1$, then the term
  $s\theta_1$ loops \wrt{} $\cR$.
\end{theorem}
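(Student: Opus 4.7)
The plan is to combine Proposition~\ref{prop:rewritesto} with the closure of $\ra_{\cR}$ under substitution, using the semi-unification identity to fold the rewrite sequence into the looping shape $t\ra^{+}_{\cR}C'[t\theta]$ required by the definition.

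First, from $[s^{\#}\ra t^{\#}]\in\gunf(\cR,n)$ I would invoke Proposition~\ref{prop:rewritesto} to obtain a context $C$ with
\begin{linenomath*}
\[s\ra^{+}_{\cR}C[t].\]
\end{linenomath*}
Since the rewrite relation $\ra_{\cR}$ is stable under substitutions (applying $\theta_1$ to both sides preserves each individual rewrite step, because each step uses a rule instance and substitutions compose), this sequence can be instantiated by $\theta_1$ to yield
\begin{linenomath*}
\[s\theta_1 \ra^{+}_{\cR} C\theta_1[t\theta_1],\]
\end{linenomath*}
where $C\theta_1$ denotes the context obtained by applying $\theta_1$ to the non-hole part of $C$.

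Next, I would plug in the semi-unification hypothesis $s\theta_1\theta_2 = t\theta_1$, rewriting the right-hand side as $C\theta_1[s\theta_1\theta_2]$, so that
\begin{linenomath*}
\[s\theta_1 \ra^{+}_{\cR} C\theta_1[(s\theta_1)\theta_2].\]
\end{linenomath*}
Setting $C' = C\theta_1$ and $\theta = \theta_2$, this is exactly the pattern $s\theta_1 \ra^{+}_{\cR} C'[(s\theta_1)\theta]$ required by the definition of a loop, so $s\theta_1$ loops \wrt{} $\cR$.

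There is no serious obstacle here since Proposition~\ref{prop:rewritesto} already does all the heavy lifting in bridging the $\cF^{\#}$-level semantics to actual rewriting over $\cF$. The only points worth being careful about are (i) justifying stability of $\ra^{+}_{\cR}$ under the substitution $\theta_1$ in one stroke, and (ii) checking that the tuple symbols in $s^{\#}\ra t^{\#}$ cause no trouble, which is immediate because the semi-unification equality $s\theta_1\theta_2=t\theta_1$ is stated on the non-tuple terms $s$ and $t$, and Proposition~\ref{prop:rewritesto} already descends to rewriting on $s$ and $t$ directly.
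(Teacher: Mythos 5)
Your proposal is correct and follows essentially the same route as the paper's proof: invoke Proposition~\ref{prop:rewritesto} to get $s\ra^{+}_{\cR}C[t]$, apply stability of $\ra_{\cR}$ under $\theta_1$, and substitute $t\theta_1=s\theta_1\theta_2$ to land on the loop pattern with context $C\theta_1$ and substitution $\theta_2$. Nothing is missing.
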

\begin{proof}
  By Prop.~\ref{prop:rewritesto},
  $s \ra_{\cR}^+ C[t]$ for some context $C$.
  Since $\ra_{\cR}$ is stable, we have
  \begin{linenomath*}
    \[s\theta_1 \ra^+_{\cR} C[t]\theta_1 \quad\ie\quad
    s\theta_1 \ra^+_{\cR} C\theta_1[t\theta_1] \quad\ie\quad
    s\theta_1 \ra^+_{\cR} C\theta_1[s\theta_1\theta_2]\;.\]
  \end{linenomath*}
  Hence, $s\theta_1$ loops \wrt{} $\cR$.
  %\qed
\end{proof}

\begin{example}[Ex.~\ref{ex:gunf1} continued]
  \label{ex:nonterm1}
  We have
  \begin{linenomath*}
    \[[\underbrace{\f^{\#}(\s(\h),\s(\h),\s(\h))\ra\f^{\#}(\s(\h),\s(\h),\s(\h))}_{U^{''}_1}]
    \in \gunf(\cR,3)\]
  \end{linenomath*}
  with $\f(\s(\h),\s(\h),\s(\h))\theta_1\theta_2=
  \f(\s(\h),\s(\h),\s(\h))\theta_1$
  for $\theta_1=\theta_2=\id$. Consequently,
  $\f(\s(\h),\s(\h),\s(\h))\theta_1 =
  \f(\s(\h),\s(\h),\s(\h))$ loops
  \wrt{} $\cR$.
  % \qed
\end{example}

\begin{example}[Ex.~\ref{ex:gunf2} continued]
  \label{ex:nonterm2}
  $[\f^{\#}(\zero)\ra\f^{\#}(\zero)] \in \gunf(\cR,1)$ with
  $\f(\zero)\theta_1\theta_2=\f(\zero)\theta_1$
  for $\theta_1=\theta_2=\id$. Hence,
  $\f(\zero)\theta_1 = \f(\zero)$ loops \wrt{} $\cR$.
  % \qed
\end{example}

The substitutions $\theta_1$ and $\theta_2$ that
we use in the next example are more sophisticated than
in Ex.~\ref{ex:nonterm1} and Ex.~\ref{ex:nonterm2}.
\begin{example}\label{ex:semi-unification}
  Let $\cR = \{\f(\g(x,\zero),y) \ra \f(\g(\zero,x),\h(y))\}$.
  Then, $\scc(\cR)=\{\cC\}$ where $\cC$ consists of the node
  $N=\f^{\#}(\g(x,\zero),y)\ra\f^{\#}(\g(\zero,x),\h(y))$ and
  of the arc $(N,N)$. Moreover, $\scycles(\cR) =\{[N]\}$ hence
  $[N]\in\gunf(\cR,0)$. Therefore, as
  $\f(\g(x,\zero),y)\theta_1\theta_2=\f(\g(\zero,x),\h(y))\theta_1$
  for $\theta_1=\{x/\zero\}$ and $\theta_2=\{y/\h(y)\}$,
  by Theorem~\ref{theorem:nonterm-criterion} we have that
  $\f(\g(x,\zero),y)\theta_1 = \f(\g(\zero,\zero),y)$ loops
  \wrt{} $\cR$.
  % \qed
\end{example}

We do not have an example where semi-unification is necessary for detecting
a loop. In every example that we have considered, matching or unification
were enough. However, semi-unification sometimes allows us to detect loops
earlier in the unfolding process than with matching and unification. This is
important in practice, because the number of unfolded rules can grow rapidly
from iteration to iteration.
\begin{example}[Ex.~\ref{ex:semi-unification} continued]
  Semi-unification allows us to detect a loop at iteration 0 of $\gunfop_{\cR}$.
  But a loop can also be detected at iteration 1 using matching. Indeed, we have
  \begin{linenomath*}
    \[N=\underbrace{\f^{\#}(\g(x,\zero),y)}_l\ra
    \underbrace{\f^{\#}(\g(\zero,x),\h(y))}_r\]
  \end{linenomath*}
  with $p=1.1\in\dpos(r,l)$. Hence,
  \begin{linenomath*}
    \[U=\underbrace{\f^{\#}(\g(\zero,\zero),y)}_{s^{\#}}\ra
    \underbrace{\f^{\#}(\g(\zero,\zero),\h(y))}_{t^{\#}}
    \in F_{\cR}(l\ra r, l, p)\;.\]
  \end{linenomath*}
  So, by~$(4)$ in Def.~\ref{def:gunf-op}, we have
  $[U]\in \gunf(\cR,1)$. Notice that
  $s\theta=t$ for $\theta=\{y/\h(y)\}$, so $s$ matches $t$.
  Moreover, by Theorem~\ref{theorem:nonterm-criterion},
  $s=\f(\g(\zero,\zero),y)$ loops \wrt{} $\cR$ (take $\theta_1=\id$
  and $\theta_2=\theta$).
  % \qed
\end{example}

%%%%%%%%%%%%%%%%%%%%%
\section{Further comparisons with the approach of~\cite{payet08}}
\label{sect:comparison08}
%%%%%%%%%%%%%%%%%%%%%
The approach that we have presented in~\cite{payet08} relies on an unfolding
operator $U_{\cR}$ (where $\cR$ is the TRS under analysis) which is also based
on forward and backward narrowing (as $F_{\cR}$ and $B_{\cR}$ herein).
But, unlike the technique that we have presented above, it directly unfolds
the rules (not the dependency pairs) of $\cR$ and it does not compute any SCC.
Moreover, it consists of a thorough computation of the iterations of $U_{\cR}$
followed by a mechanism for eliminating rules that cannot be further unfolded
to a rule $l\ra r$ where $l$ semi-unifies with $r$. Such rules are said to be
\emph{root-useless}. The set of root-useless rules is an overapproximation of
the set of useless rules (rules that cannot contribute to detecting a loop),
hence the elimination technique of~\cite{payet08} may remove some rules which
are actually useful for detecting a loop. Our non-termination analyser which
is based on~\cite{payet08} uses a time limit. It stops whenever it has detected
a loop within the limit (then it answers NO, standing for
\emph{No, this TRS does not terminate}
as in the Termination Competition~\cite{termcomp}) or when the limit
has been reached (then it answers TIME OUT) or when no more unfolded
rule could be generated at some point within the limit
($(U_{\cR}\uparrow n)(\cR)=\emptyset$ for some $n$).
In the last situation, either the TRS under analysis is not looping
(it is terminating or non-looping non-terminating)
or it is looping but a loop for it cannot be captured by the approach
(for instance, the elimination mechanism has removed all the useful rules).
In such a situation, our analyser answers DON'T KNOW.
%%% IT SEEMS THAT THE FOLLOWING RESULT HOLDS: IF A TRS ADMITS A LOOP
%%% THEN THE LOOP CAN NECESSARILY BE DETECTED WITH UNFOLDING + SEMI-UNIFICATION
%%% (AND MAYBE UNIFICATION OR MATCHING, INSTEAD OF SEMI-UNIFICATION).
%%%
%%% MOREOVER, IT SEEMS THAT THE ELIMINATION MECHANISM OF TCS'08 NEVER REMOVES
%%% ALL THE USEFUL RULES. IT MAY REMOVE SOME OF THEM BUT THE REMAINING
%%% ONES ARE SUFFICIENT FOR DETECTING A LOOP. IN EX. 5.3 OF TCS'08, THE
%%% FIRST RULE IS REMOVED BUT THE SECOND ONE IS SPLIT INTO A RULE THAT
%%% ALLOWS US TO DETECT A LOOP. HENCE, IT SEEMS THAT THE APPROACH OF
%%% TCS'08 IS COMPLETE: IN THEORY, if a loop can be detected using
%%% unfolding and semi-unification, then IT FINDS IT.
\begin{example}\label{ex:dontknow}
  Consider the terminating TRS $\cR=\{\zero\ra\one\}$.
  As the left-hand (resp. right-hand) side of the rule of $\cR$
  cannot be narrowed backwards (resp. forwards) with $\cR$ then we have
  $(U_{\cR}\uparrow 1)(\cR)=\emptyset$.
  %qed
\end{example}

In contrast, the approach that we have presented in
Sect.~\ref{sect:gunf}--\ref{sect:loops} above avoids the generation of
some rules by only unfolding disagreement pairs.
Currently, in terms of loop detection power, we do not have any theoretical
comparison between this new technique and that of~\cite{payet08}.
Our new non-termination analyser also uses a time limit and answers NO,
TIME OUT or DON'T KNOW when no more unfolded rules are generated at some
point ($\gunf(\cR,n)=\emptyset$ for some $n$, as in Ex.~\ref{ex:dontknow}).
Moreover, it allows the user to fix a \emph{selection strategy} of disagreement
pairs: in Def.~\ref{def:gunf-op}, the conditions $p\in\dpos(r,s)$
(cases~(2)--(3)) and $p\in\dpos(r,l)$ (case~(4)) are replaced
with $p\in\select_{\cR}(l\ra r,s\ra t)$ and $p\in\select_{\cR}(l\ra r,l\ra r)$
respectively, where $\select_{\cR}$ can be one of the following functions.
\begin{description}
  \item[Selection of all the pairs:] $\selecta_{\cR}(l\ra r,s\ra t)=\dpos(r,s)$.
  \item[Leftmost selection:] if $\dpos(r,s)=\emptyset$ then
  $\selectlm_{\cR}(l\ra r,s\ra t)=\emptyset$, otherwise
  $\selectlm_{\cR}(l\ra r,s\ra t)=\{p\}$ where $p$ is the leftmost disagreement
  position of $r$ and $s$.
  \item [Leftmost selection with non-empty unfoldings:]
  \begin{linenomath*}
    \[\selectlmne_{\cR}(l\ra r,s\ra t)=\{p\}\]
  \end{linenomath*}
  where $p$ is the leftmost disagreement position of $r$ and $s$ such that
  \begin{linenomath*}
    \[F_{\cR}(l\ra r, s, p)\cup B_{\cR}(s\ra t, r, p)\neq\emptyset\;.\]
  \end{linenomath*}
  If such a position $p$ does not exist then
  $\selectlmne_{\cR}(l\ra r,s\ra t)=\emptyset$.
\end{description}
\begin{example}
  Let $\cR=\{\f(\s(\zero),\s(\one),z)\ra \f(x,y,z)\}$ and
  $l=\f^{\#}(\s(\zero),\s(\one),z)$ and $r=\f^{\#}(x,y,z)$.
  Then, we have
  $\selecta_{\cR}(l\ra r,l\ra r)=\dpos(r,l)=\{1,2\}$.
  Moreover, $1$ is the leftmost disagreement position of $r$ and $l$ because
  $r|_1=x$ occurs to the left of $r|_2=y$ in $r$ and $l|_1=\s(\zero)$ occurs
  to the left of $l|_2=\s(\one)$ in $l$. Therefore, we have
  $\selectlm_{\cR}(l\ra r,l\ra r) = \{1\}$.
  %\qed
\end{example}
\begin{example}
  \label{ex:lmne}
  Let $\cR = \left\lbrace \f(x,x) \ra \f(\g(x),\h(x)),\ \h(x) \ra \g(x)
  \right\rbrace$. Then, $\scc(\cR)=\{\cC\}$ where $\cC$ consists of the node
  $N=l\ra r=\f^{\#}(x,x)\ra\f^{\#}(\g(x),\h(x))$ and of the arc $(N,N)$.
  Then, $\dpos(r,l)=\{1,2\}$ and $\selectlm_{\cR}(N,N) = \{1\}$.
  As $F_{\cR}(N,l,1)\cup B_{\cR}(N,r,1)=\emptyset$ and
  $F_{\cR}(N,l,2)\cup B_{\cR}(N,r,2)\neq\emptyset$
  (for instance, $\f^{\#}(x,x)\ra\f^{\#}(\g(x),\g(x))\in F_{\cR}(N,l,2)$
  is obtained from narrowing $r|_2=\h(x)$ forwards with $\h(x) \ra \g(x)$),
  then $\selectlmne_{\cR}(N,N) = \{2\}$.
  %\qed
\end{example}

As the approach of~\cite{payet08}, and depending on the strategy used for
selecting disagreement pairs, our new technique is not complete in the sense
that it may miss some loop witnesses.
\begin{example}[Ex.~\ref{ex:lmne} continued]
  \label{ex:uncomplete}
  We have $\gunf(\cR,0)=\scycles(\cR)=\{[N]\}$.
  As $l$ does not semi-unify with $r$, no loop is detected
  from $\gunf(\cR,0)$, so we go on and compute $\gunf(\cR,1)$.
  Only case~(4) of Def.~\ref{def:gunf-op} is applicable to $[N]$.
  First, suppose that $\select_{\cR}=\selectlm_{\cR}$. Then,
  $\select_{\cR}(N,N)=\{1\}$ and, as
  $F_{\cR}(N,l,1)\cup B_{\cR}(N,r,1)=\emptyset$,
  case~(4) does not produce any rule.
  Consequently, we have $\gunf(\cR,1) = \emptyset$, hence no loop is detected
  for $\cR$. Now, suppose that $\select_{\cR}=\selectlmne_{\cR}$.
  Then, $\select_{\cR}(N,N)=\{2\}$.
  Narrowing $r|_2=\h(x)$ forwards with $\h(x) \ra \g(x)$, we get the
  rule $N'=\f^{\#}(x,x)\ra\f^{\#}(\g(x),\g(x))$ which is an element of
  $F_{\cR}(N,l,2)$. Hence, $[N']\in\gunf(\cR,1)$.
  As in $N'$ we have that $\f(x,x)$ semi-unifies with $\f(\g(x),\g(x))$
  (take $\theta_1=\id$ and $\theta_2=\{x/\g(x)\}$), then $\f(x,x)$
  loops \wrt{} $\cR$.
  This loop is also detected by the approach of~\cite{payet08}.
  % \qed
\end{example}

%%%%%%%%%%%%%%%%%%%%%
\section{Experiments}\label{sect:experiments}
%%%%%%%%%%%%%%%%%%%%%
We have implemented the technique of this paper in our analyser \nti
\footnote{
{\tt\url{http://lim.univ-reunion.fr/staff/epayet/Research/NTI/NTI.html}}}
(Non-Termination Inference).
For our experiments, we have extracted from the directory
\verb+TRS_Standard+ of the TPBD~\cite{TPDB} all the valid rewrite
systems\footnote{Surprisingly, the subdirectory
\texttt{Transformed\_CSR\_04} contains 60~files where a pair
$l\ra r$ with $\Var(r)\not\subseteq\Var(l)$ occurs. These pairs
are not valid rewrite rules.}
that were either proved looping or unproved\footnote{By \emph{unproved}
we mean that no tool succeeded in proving that these TRSs were terminating
or non-terminating (all the tools failed on these TRSs).}
during the Termination Competition~2017 (TC'17)~\cite{termcomp}.
Otherwise stated, we removed from \verb+TRS_Standard+
all the non-valid TRSs and all the TRSs that were proved terminating or
non-looping non-terminating by a tool participating in the competition.
We ended up with a set $\cS$ of 333~rewrite systems. We let $\cL$
(resp. $\cU$) be the subset of $\cS$ consisting of all the systems that were
proved looping (resp. that were unproved) during TC'17. Some characteristics
of $\cL$ and $\cU$ are reported in Table~\ref{tab:characteristics}.
Note that the complete set of simple cycles of a TRS may be really huge,
hence \nti{} only computes a subset of it. The simple cycle characteristics
given in Table~\ref{tab:characteristics} relate to the subsets
computed by \nti.

\begin{table}
  \begin{center}
    \begin{tabular}{|c||rr|rr|r||rr|rr|r|}
      \hline
      \multirow{2}{*}{$\cS=\cL\uplus\cU$ \scriptsize{(333~TRSs)}}
      & \multicolumn{5}{c||}{$\cL$ \scriptsize{(173~TRSs)}}
      & \multicolumn{5}{c|}{$\cU$ \scriptsize{(160~TRSs)}}\\
      \cline{2-11}
      & \multicolumn{2}{c|}{\textbf{Min}}
      & \multicolumn{2}{c|}{\textbf{Max}}
      & \multicolumn{1}{c||}{\textbf{Average}}
      & \multicolumn{2}{c|}{\textbf{Min}}
      & \multicolumn{2}{c|}{\textbf{Max}}
      & \multicolumn{1}{c|}{\textbf{Average}} \\
      \hline\hline
      \textbf{TRS size} & 1 & {\scriptsize{}[17]} & 104 & {\scriptsize{}[1]} & 11.08
      & 1 & {\scriptsize{}[9]} & 837 & {\scriptsize{}[1]} & 64.78 \\
      \hline\hline
      \textbf{Number of SCCs} & 1 & {\scriptsize{}[101]} & 12 & {\scriptsize{}[1]} & 1.95
      & 1 & {\scriptsize{}[70]} & 130 & {\scriptsize{}[1]} & 6.14 \\
      \hline
      \textbf{SCC size} & 1 & {\scriptsize{}[96]} & 192 & {\scriptsize{}[1]} & 4.44
      & 1 & {\scriptsize{}[54]} & 473 & {\scriptsize{}[1]} & 10.79 \\
      \hline\hline
      \textbf{Number of simple cycles} & 1 & {\scriptsize{}[47]} & 185 & {\scriptsize{}[1]} & 8.55
      & 2 & {\scriptsize{}[15]} & 1,176 & {\scriptsize{}[1]} & 69.02 \\
      \hline
      \textbf{Simple cycle size} & 1 & {\scriptsize{}[157]} & 9 & {\scriptsize{}[2]} & 2.23
      & 1 & {\scriptsize{}[157]} & 9 & {\scriptsize{}[4]} & 2.21 \\
      \hline\hline
      \textbf{Number of symbols} & 1 & {\scriptsize{}[4]} & 66 & {\scriptsize{}[1]} & 9.09
      & 2 & {\scriptsize{}[7]} & 259 & {\scriptsize{}[1]} & 24.14 \\
      \hline
      \textbf{Symbol arity} & 0 & {\scriptsize{}[153]} & 5 & {\scriptsize{}[2]} & 1.07
      & 0 & {\scriptsize{}[150]} & 12 & {\scriptsize{}[2]} & 1.94 \\
      \hline\hline
      \textbf{Number of defined symbols}
      & 1 & {\scriptsize{}[28]} & 58 & {\scriptsize{}[1]} & 5.21
      & 1 & {\scriptsize{}[15]} & 132 & {\scriptsize{}[2]} & 17.19 \\
      \hline
      \textbf{Defined symbol arity} & 0 & {\scriptsize{}[74]} & 5 & {\scriptsize{}[2]} & 1.38
      & 0 & {\scriptsize{}[28]} & 12 & {\scriptsize{}[2]} & 2.27 \\
      \hline
    \end{tabular}
  \end{center}
  \caption{Some characteristics of the analysed TRSs.
  Sizes are in number of rules. In square brackets, we report the
  number of TRSs with the corresponding min or max.}
  \label{tab:characteristics}
\end{table}

We have run our new approach (\nti'18) and that of~\cite{payet08}
(\nti'08) on the TRSs of $\cS$. The results are reported in
Table~\ref{tab:experiments-L} and Table~\ref{tab:experiments-U}.
We used an Intel 2-core i5 at 2~GHz with 8~GB of RAM and
the time limit fixed for a proof was 120s.
For every selection strategy, \nti'18 issues more successful proofs (NO)
and generates less unfolded rules than \nti'08. Moreover, as it avoids the
generation of some rules instead of computing all the unfolding and
then eliminating some rules (as \nti'08 does), its times are better.
%
% The set of successes of \nti'08 is not always included in that of \nti'18.
At the bottom of the tables, we give the numbers of TRSs proved looping by
both approaches and by one approach only.
\nti'18 succeeds on all the TRSs of $\cL$ on which \nti'08 succeeds,
but it fails on one TRS of $\cU$ on which \nti'08 succeeds.
This is due to our simplified computation of the set of simple cycles:
our algorithm does not generate the cycle that would allow \nti'18 to succeed
and \nti'18 times out, trying to unfold syntactic loops from which it
cannot detect anything.
Another point to note is that the implementation of the new approach
does not need to run several analyses in parallel to achieve the results
presented in Table~\ref{tab:experiments-L} and Table~\ref{tab:experiments-U}.
One single thread of computation is enough. On the contrary, for the approach
of~\cite{payet08}, 3 parallel threads are necessary: one with forward
unfoldings only, one with backward unfoldings only and one with forward and
backward unfoldings together.
The results get worse if \nti'08 only runs one thread performing forward
and backward unfoldings together.
In Table~\ref{tab:experiments-L} and Table~\ref{tab:experiments-U},
we report in square brackets the number of successes of \nti'08 when
it only runs one thread performing both forward and backward unfoldings.

\begin{table}[ht]
  \begin{center}
    \begin{tabular}{|c||c||r|r|r|}
      \hline
      \multirow{2}{*}{$\cL$ \scriptsize{(173~TRSs)}}
      & \multirow{2}{*}{\nti'08}
      & \multicolumn{3}{|c|}{\nti'18}\\
      \cline{3-5}
      & & $\selecta$ & $\selectlm$ & $\selectlmne$ \\
      \hline\hline
      \textbf{NO} & \multicolumn{1}{|r||}{152 {\scriptsize{}[149]}}
      & 157 & 157 & 158\\
      \hline
      \textbf{DON'T KNOW} & \multicolumn{1}{|r||}{0}
      & 0 & 1 & 0 \\
      \hline
      \textbf{TIME OUT} & \multicolumn{1}{|r||}{21}
      & 16 & 15 & 15 \\
      \hline\hline
      \textbf{Time} & \multicolumn{1}{|r||}{2,966s}
      & 2,194s & 1,890s & 1,889s \\
      \hline
      \textbf{Generated rules} & \multicolumn{1}{|r||}{11,167,976}
      & 9,030,962 & 8,857,421 & 8,860,560 \\
      \hline\hline
      \multicolumn{2}{|c||}{$\text{{\bf{}NO}(\nti'08)} \cap \text{{\bf{}NO}(\nti'18)}$}
      & 152 & 151 & 152 \\
      \hline
      \multicolumn{2}{|c||}{$\text{{\bf{}NO}(\nti'08)} \setminus \text{{\bf{}NO}(\nti'18)}$}
      & 0 & 1 & 0\\
      \hline
      \multicolumn{2}{|c||}{$\text{{\bf{}NO}(\nti'18)} \setminus \text{{\bf{}NO}(\nti'08)}$}
      & 5 & 6 & 6\\
      \hline
    \end{tabular}
  \end{center}
  \caption{Analysis results on the TRSs of $\cL$.}
  \label{tab:experiments-L}
\end{table}

\begin{table}[ht]
  \begin{center}
    \begin{tabular}{|c||c||r|r|r|}
      \hline
      \multirow{2}{*}{$\cU$ \scriptsize{(160~TRSs)}}
      & \multirow{2}{*}{\nti'08}
      & \multicolumn{3}{|c|}{\nti'18}\\
      \cline{3-5}
      & & $\selecta$ & $\selectlm$ & $\selectlmne$ \\
      \hline\hline
      \textbf{NO} & \multicolumn{1}{|r||}{4 {\scriptsize{}[3]}}
      & 6 & 6 & 6 \\
      \hline
      \textbf{DON'T KNOW} & \multicolumn{1}{|r||}{0}
      & 0 & 1 & 0 \\
      \hline
      \textbf{TIME OUT} & \multicolumn{1}{|r||}{156}
      & 154 & 153 & 154 \\
      \hline\hline
      \textbf{Time} & \multicolumn{1}{|r||}{18,742s}
      & 18,563s & 18,414s & 18,534s \\
      \hline
      \textbf{Generated rules} & \multicolumn{1}{|r||}{64,011,002}
      & 53,134,334 & 61,245,705 & 63,300,604 \\
      \hline\hline
      \multicolumn{2}{|c||}{$\text{{\bf{}NO}(\nti'08)} \cap \text{{\bf{}NO}(\nti'18)}$}
      & 3 & 3 & 3 \\
      \hline
      \multicolumn{2}{|c||}{$\text{{\bf{}NO}(\nti'08)} \setminus \text{{\bf{}NO}(\nti'18)}$}
      & 1 & 1 & 1\\
      \hline
      \multicolumn{2}{|c||}{$\text{{\bf{}NO}(\nti'18)} \setminus \text{{\bf{}NO}(\nti'08)}$}
      & 3 & 3 & 3\\
      \hline
    \end{tabular}
  \end{center}
  \caption{Analysis results on the TRSs of $\cU$.}
  \label{tab:experiments-U}
\end{table}

Four tools participated in the category \emph{TRS Standard}
of TC'17: \aprove{}~\cite{aproveWeb,aprove}, \muterm{}~\cite{mutermWeb},
\natt{}~\cite{nattWeb} and \wanda{}~\cite{wandaWeb}.
The numbers of TRSs proved looping by each of them during the competition
is reported in Table~\ref{tab:competition}. An important point to note here
is that the time limit fixed in TC'17 was 300s, whereas in our experiments with
\nti'18 and \nti'08 it was 120s. Moreover, the machine we used
(an Intel 2-core i5 at 2~GHz with 8~GB of RAM) is much less powerful than
the machine used during TC'17 (the StarExec platform~\cite{starexec}
running on an Intel Xeon E5-2609 at 2.4~GHz with 129~GB of RAM).
All the tools of TC'17 failed on all the rewrite systems of $\cU$. In contrast,
\nti'18 (resp. \nti'08) finds a loop for 6 (resp. 4) of them.
Regarding $\cL$, \aprove{}~was able to prove loopingness of 172 out of 173~TRSs.
The only TRS of $\cL$ on which \aprove{} failed\footnote{\texttt{Ex6\_15\_AEL02\_FR.xml}
in the directory \texttt{TRS\_Standard/Transformed\_CSR\_04}} was proved looping
by \natt{}. In comparison, our approach succeeds on 158~systems of $\cL$,
less than \aprove{} but more than the other tools of TC'17.
Similarly to our approach, \aprove{} handles the SCCs of the estimated
dependency graph independently, but it first performs a termination analysis.
The non-termination analysis is then only applied to those SCCs that could
not be proved terminating.
On the contrary, \nti{} only performs non-termination proofs. If an SCC
is terminating, it cannot prove it and keeps on trying a non-termination
proof, unnecessarily generating unfolded rules at the expense of the
analysis of the other SCCs.
The loop detection techniques implemented in \aprove{} and \nti{}'18
are based on the idea of searching for loops by forward and backward narrowing
of dependency pairs and by using semi-unification to detect potential loops.
This idea has been presented in~\cite{gieslTS05} where heuristics are used to
select forward or backward narrowing. Note that in constrast, the technique
that we present herein does not use any heuristics and proceeds both forwards
and backwards.

\begin{table}
  \begin{center}
    \begin{tabular}{|c||r|r|r|r||r|r|}
      \hline
      \multirow{2}{*}{$\cS=\cL\uplus\cU$ \scriptsize{(333~TRSs)}}
      & \multicolumn{4}{c||}{TC'17 \scriptsize{(time limit = 300s)}}
      & \multicolumn{2}{c|}{\scriptsize{(time limit = 120s)}} \\
      \cline{2-7}
      & \multicolumn{1}{c|}{\aprove} & \multicolumn{1}{c|}{\muterm}
      & \multicolumn{1}{c|}{\natt} & \multicolumn{1}{c||}{\wanda}
      & \multicolumn{1}{c|}{\nti'08} &
      \multicolumn{1}{c|}{\nti'18 \scriptsize{$(\selectlmne)$}} \\
      \hline\hline
      $\cL$ \scriptsize{(173~TRSs)} & 172 & 81 & 109 & 0 & 152 & 158 \\
      \hline
      $\cU$ \scriptsize{(160~TRSs)} & 0 & 0 & 0 & 0 & 4 & 6 \\
      \hline
    \end{tabular}
  \end{center}
\caption{Number of successes (NO) on $\cL$ and $\cU$ obtained during
TC'17 and those obtained by \nti{} during our experiments.}
\label{tab:competition}
\end{table}

%%%%%%%%%%%%%%%%%%%%%%%
\section{Conclusion}
%%%%%%%%%%%%%%%%%%%%%%%
We have reconsidered and modified the unfolding-based technique
of~\cite{payet08} for detecting loops in standard term rewriting. The new
approach uses disagreement pairs for guiding the unfoldings, which now are
only partially computed, whereas the technique
of~\cite{payet08} consists of a thorough computation followed by a mechanism
for eliminating some rules. Two theoretical questions remain open: in terms
of loop detection, is an approach more powerful than the other and does
semi-unification subsume matching and unification?

We have implemented the new approach in our tool \nti{} and compared it
to~\cite{payet08} on a set of 333~rewrite systems. The new results are
better (better times, more successful proofs, less unfolded rules).
Moreover, the approach compares well to the tools that participated in TC'17.
However, the number of generated rules is still important. In an attempt
to reduce it, during our experiments we added the elimination mechanism
of~\cite{payet08} to the new approach, but the results we recorded were
not satisfactory (an equivalent, slightly smaller, number of generated
rules but, due to the computational overhead, bigger times and less successes);
hence, we removed it.
Termination analysis may help to reduce the number of unfolded rules
by detecting terminating SCCs in the estimated dependency graph \ie
SCCs on which it is useless to try a non-termination proof.
In other words, we could use termination analysis
as an elimination mechanism. Several efficient and
powerful termination analysers have been implemented so far~\cite{tools}
and one of them could be called by \nti.
A final idea to improve our approach would be to consider more
sophisticated strategies for selecting disagreement pairs.

%%%%%%%%%%%%%%%%%%%%%%%%%%
\section*{Acknowledgements}
%%%%%%%%%%%%%%%%%%%%%%%%%%
The author thanks the anonymous reviewers for their many helpful
comments and constructive criticisms. He also thanks Fred~Mesnard
for presenting the paper at the symposium.

%%%%%%%%%%%%%%%%%%%%%%%%%%%%%%%%%%%
% Bibliography
%%%%%%%%%%%%%%%%%%%%%%%%%%%%%%%%%%%
\bibliographystyle{plain}
%\bibliography{biblio}

\begin{thebibliography}{10}

\bibitem{alpuenteFMV97}
M.~Alpuente, M.~Falaschi, G.~Moreno, and G.~Vidal.
\newblock Safe folding/unfolding with conditional narrowing.
\newblock In M.~Hanus, J.~Heering, and K.~Meinke, editors, {\em Proc. of
  Algebraic and Logic Programming, 6th International Joint Conference (ALP/HOA
  97)}, volume 1298 of {\em LNCS}, pages 1--15. Springer, 1997.

\bibitem{aproveWeb}
\aprove{} {W}eb site.
\newblock \url{http://aprove.informatik.rwth-aachen.de/}.

\bibitem{AG00}
T.~Arts and J.~Giesl.
\newblock Termination of term rewriting using dependency pairs.
\newblock {\em Theoretical Computer Science}, 236:133--178, 2000.

\bibitem{baaderN98}
F.~Baader and T.~Nipkow.
\newblock {\em Term Rewriting and All That}.
\newblock Cambridge University Press, 1998.

\bibitem{aprove}
J.~Giesl, C.~Aschermann, M.~Brockschmidt, F.~Emmes, F.~Frohn, C.~Fuhs,
  J.~Hensel, C.~Otto, M.~Pl\"ucker, P.~Schneider-Kamp, T.~Str\"oder,
  S.~Swiderski, and R.~Thiemann.
\newblock Analyzing program termination and complexity automatically with
  \aprove.
\newblock {\em Journal of Automated Reasoning}, 58(1):3--31, 2017.

\bibitem{gieslTS05}
J.~Giesl, R.~Thiemann, and P.~Schneider-Kamp.
\newblock Proving and disproving termination of higher-order functions.
\newblock In B.~Gramlich, editor, {\em Proc. of the 5th {I}nternational
  {W}orkshop on {F}rontiers of {C}ombining {S}ystems (FroCoS'05)}, volume 3717
  of {\em LNAI}, pages 216--231. Springer, 2005.

\bibitem{Guttag83}
J.~V.~Guttag, D.~Kapur, and D.~R.~Musser.
\newblock On proving uniform termination and restricted termination of
  rewriting systems.
\newblock {\em SIAM Journal of Computing}, 12(1):189--214, 1983.

\bibitem{Kapur91}
D.~Kapur, D.~Musser, P.~Narendran, and J.~Stillman.
\newblock Semi-unification.
\newblock {\em Theoretical Computer Science}, 81(2):169--187, 1991.

\bibitem{wandaWeb}
C.~Kop.
\newblock \wanda{} - {A} higher-order termination tool.

\newblock \url{http://wandahot.sourceforge.net}.

\bibitem{lankford78}
D.S. Lankford and D.~R. Musser.
\newblock A finite termination criterion.
\newblock Unpublished Draft, USC Information Sciences Institute, Marina Del
  Rey, CA, 1978.

\bibitem{mutermWeb}
\muterm{} {W}eb site.
\newblock \url{http://zenon.dsic.upv.es/muterm/}.

\bibitem{nattWeb}
\natt{} - {T}he {N}agoya {T}ermination {T}ool.

\newblock \url{https://www.trs.css.i.nagoya-u.ac.jp/NaTT/}.

\bibitem{payet08}
\'E. Payet.
\newblock Loop detection in term rewriting using the eliminating unfoldings.
\newblock {\em Theoretical Computer Science}, 403(2-3):307--327, 2008.

\bibitem{starexec}
{S}tar{E}xec - {A} cross-community solver execution and benchmark library
  service.

\newblock \url{http://www.starexec.org/}.

\bibitem{termcomp}
{T}he {A}nnual {I}nternational {T}ermination {C}ompetition.

\newblock \url{http://termination-portal.org/wiki/Termination_Competition}.

\bibitem{tools}
Termination {P}ortal -- {A}n (incomplete) overview of existing tools for
  termination analysis.
\newblock \url{http://termination-portal.org/wiki/Category:Tools}.

\bibitem{TPDB}
{T}ermination {P}roblems {D}ata {B}ase.
\newblock \url{http://termination-portal.org/wiki/TPDB}.

\bibitem{toyama87}
Y.~Toyama.
\newblock Counterexamples to the termination for the direct sum of term
  rewriting systems.
\newblock {\em Information Processing Letters}, 25(3):141--143, 1987.

\end{thebibliography}

\end{document}